\newtheorem{definition}{Definition}
\begin{document}

\title{Online Efficient Secure Logistic Regression based on Function Secret Sharing}


\author{Jing Liu}
\authornote{Both authors contributed equally to this research.}
\affiliation{%
  \institution{East China Normal University}
  \city{Shanghai}
  \country{China}}
\affiliation{%
  \institution{Ant Group}
  \city{Hangzhou}
  \country{China}}
\email{jeanliu@stu.ecnu.edu.cn}

\author{Jamie Cui}
\authornotemark[1]
\affiliation{%
  \institution{Ant Group}
  \city{Hangzhou}
  \country{China}}
\affiliation{%
  \institution{East China Normal University}
  \city{Shanghai}
  \country{China}}
\email{shanzhu.cjm@antgroup.com}

\author{Cen Chen}
\authornote{Cen Chen is the corresponding author.}
\affiliation{%
  \institution{East China Normal University}
  \city{Shanghai}
  \country{China}}
\email{cenchen@dase.ecnu.edu.cn}








\newcommand{\emphsection}[1]{\smallskip\noindent\textbf{#1}}
\newcommand{\lj}[1]{\textcolor{red}{[#1]}}
\newcommand{\sz}[1]{\textcolor{blue}{[#1 -sz]}}
\newcommand{\ccc}[1]{\textcolor{green}{[#1 -cc]}}

\begin{abstract}
Logistic regression is an algorithm widely used for binary classification in various real-world applications such as fraud detection, medical diagnosis, and recommendation systems. However, training a logistic regression model with data from different parties raises privacy concerns. Secure Multi-Party Computation (MPC) is a cryptographic tool that allows multiple parties to train a logistic regression model jointly without compromising privacy. The efficiency of the online training phase becomes crucial when dealing with large-scale data in practice. 
In this paper, we propose an online efficient protocol for privacy-preserving logistic regression based on \textit{Function Secret Sharing (FSS)}. Our protocols are designed in the two non-colluding servers setting and assume the existence of a third-party dealer who only poses correlated randomness to the computing parties. During the online phase, two servers jointly train a logistic regression model on their private data by utilizing pre-generated correlated randomness. 
Furthermore, we propose accurate and MPC-friendly alternatives to the sigmoid function and encapsulate the logistic regression training process into a function secret sharing gate. The online communication overhead significantly decreases compared with the traditional secure logistic regression training based on secret sharing. We provide both theoretical and experimental analyses to demonstrate the efficiency and effectiveness of our method.
\end{abstract}



\begin{CCSXML}
<ccs2012>
   <concept>
       <concept_id>10002978.10003022.10003028</concept_id>
       <concept_desc>Security and privacy~Domain-specific security and privacy architectures</concept_desc>
       <concept_significance>500</concept_significance>
       </concept>
 </ccs2012>
\end{CCSXML}

\ccsdesc[500]{Security and privacy~Domain-specific security and privacy architectures}

\ccsdesc[500]{Security and privacy}

\keywords{Secure multi-party computation, logistic regression, function secret sharing}


\maketitle

\section{Introduction}

Logistic regression is a machine learning model designed for binary classification tasks. 
Over the years, it has gained great popularity in various fields, including clinical prediction~\cite{nusinovici2020logistic}, 
financial prediction~\cite{kim2006logistic},
fraud detection~\cite{sahin2011detecting} and recommendation systems~\cite{cheng2016wide}, etc.
To build an effective and useful logistic regression model, a considerable amount of high-quality training data is typically necessary. The quality and quantity of the data directly impact the model's performance, with larger volumes and higher quality data leading to better model accuracy and more robust generalization. 
However, in reality, the data utilized for model training often belong to various individuals or organizations who seek to collaborate on training models but are hesitant to divulge their confidential raw data.
Moreover, the regulatory environment is continuously evolving and placing greater restrictions on access to sensitive data. As a result, the task of leveraging users' data for joint model training while maintaining the privacy of the data has become a significant challenge.

To address privacy concerns, various privacy-preserving computation techniques have been proposed, including Homomorphic Encryption (HE)~\cite{acar2018survey}, Differential Privacy (DP)~\cite{dwork2008differential}, and Secure Multi-Party Computation (MPC)~\cite{goldreich1998secure}. 
Existing MPC-based ML works~\cite{secureml, minionn, delphi,cryptflow2,Chameleon} use various techniques, such as secret sharing, garbled circuits, and also homomorphic encryption but they suffer from significant communication and computational overheads, which can lead to high online latency and communication costs. These limitations make MPC-based methods challenging to implement and deploy in practice. 
In the meantime, recent literature on MPC has introduced a new tool called \textit{Function Secret Sharing} (FSS) ~\cite{DBLP:conf/tcc/BoyleGI19}, which allows devising an almost ``optimal online communication cost'' for any functionality.
More specifically, FSS allows parties to generate generic input-independent correlated randomness (comparing with the Beaver's triples~\cite{Beaver91a}  in traditional MPC), which is later used by the online phase. 
In our paper, we assume that data owners do not have sufficient computational resources and, as a result, they outsource their data and computation to two high-performance cloud servers. To ensure data privacy and good utility, we leverage the recent advances of FSS in \textit{Secure Multi-Party Computation (MPC)}~\cite{goldreich1998secure,mohassel2018aby3} to make multiple data owners to jointly train a model while protecting the privacy of each party's data. The contributions of our work are summarized as follows:

\begin{itemize}
    \item We introduce a novel \textit{online efficient} secure logistic regression training protocol with the existence of a correlated randomness distributor (a.k.a. a dealer). Our protocol is secure in the semi-honest setting. 
    \item We propose two FSS-based approximations for the logistic function. Our first approximation comes from the Taylor series which is MPC-circuit-friendly since it only leverages multiplication and addition operations. Also, we leverage multiple interval containment (MIC) techniques for the second approximation to implement segmented evaluation for higher accuracy.
    \item Empirically, we compared our methods with traditional secure logistic regression training based on secret sharing implemented in various state-of-arts secure computation frameworks: SecretFlow~\cite{spu} and MP-SPDZ~\cite{keller2020mp}. Our results demonstrate that our approach achieves the fastest secure logistic regression training and the lowest communication overheads.
\end{itemize}
\section{Related Work}

The early instances of privacy-preserving machine learning based on MPC can be traced back to~\cite{DBLP:conf/psd/FienbergFSW06,DBLP:conf/icdm/SlavkovicNT07}. 
A variety of generic secure computation protocols have been developed to enable MPC-based machine learning.

\emphsection{Existing MPC techniques.} Secret sharing is a fundamental MPC tool that allows a group of participants to securely share and perform calculations on a secret. More specifically, a secret is divided into multiple shares by its owner, each of which is distributed to different participants. 
Several MPC frameworks~\cite{secureml,gazelle,delphi} leverage generic additively secret sharing to improve the security of distributed machine learning.  
In the meantime, the concept of \textit{Function Secret Sharing (FSS)} was originally introduced by ~\citeauthor{boyle2015function}~\cite{boyle2015function} for the purpose of private access to large distributed data while minimizing the overhead of communication. Initially, the FSS approach was extended to some simple functions, such as point functions, comparison functions, interval functions, and partial matching functions, and was used in MPC. To simplify previous FSS constructions and reduce the key size, the authors made some improvements and extensions~\cite{boyle2016function}, such as employing a tensoring operation for distributed point functions, leveraging pseudorandom generators, and reducing the intermediate keys. In addition, the researchers ~\cite{DBLP:conf/tcc/BoyleGI19} also focused on the preprocessing phase of secure computation, leveraging FSS schemes to compute different types of gates. More recently, they proposed a new approach~\cite{boyle2021function} which has considered smaller key sizes for commonly-used FSS gates. 

\emphsection{Logistic Regression based on MPC. } 
Privacy-preserving machine learning (PPML) has become a trending topic in recent years, with numerous research efforts focused~\cite{demmler2015aby, secureml,agrawal2019quotient,patra2021aby2,rathee2022secfloat,ariann}. Among them, secure training works can be categorized based on the computing parties and the security model involved. 
A typical scenario of PPML includes at least two computing parties, with a focus on the semi-honest threat model (since malicious protocols usually introduce too much overhead).
ABY~\cite{demmler2015aby} has introduced a new approach by leveraging techniques such as oblivious transfer, garbled circuits, and secret sharing to efficiently combine secure computation schemes based on Arithmetic sharing, Boolean sharing, and Yao's garbled circuits. 
SecureML~\cite{secureml} leveraged homomorphic encryption for preprocessing and enabled secure arithmetic operations on shared decimal numbers. They also introduced MPC-friendly alternatives to non-linear functions such as sigmoid and softmax, which were implemented using garbled circuits.
QUOTIENT~\cite{agrawal2019quotient},
ABY2.0~\cite{patra2021aby2} and
SecFloat~\cite{rathee2022secfloat} used secret sharing and oblivious transfer.
AriaNN~\cite{ariann} and Pika~\cite{wagh2022pika} used both secret sharing and function secret sharing.
Squirrel~\cite{lu2023squirrel} used secret sharing, homomorphic encryption, and oblivious transfer.
Over time, the number of parties involved in secure computation has gradually increased from two to many. Secure 3-party computation includes
ABY3~\cite{mohassel2018aby3},
Astra~\cite{chaudhari2019astra},
SecureNN~\cite{wagh2019securenn},
BLAZE~\cite{patra2020blaze},
Falcon~\cite{wagh2021falcon},
CryptGPU~\cite{tan2021cryptgpu},
Piranha~\cite{watson2022piranha} and
pMPL~\cite{song2022pmpl}. Among them, some work further considered more complex threat models, such as malicious servers or clients.
For secure 4-party computation, existing methods
FLASH~\cite{byali2019flash},
SWIFT~\cite{koti2021swift},
and Fantastic Four~\cite{dalskov2021fantastic} are all based on secret sharing.
\section{Preliminaries}
In this section, we aim to review the process of logistic regression and its implementation using MPC in the offline-online model. We will begin by introducing the basic notations.

\emphsection{Notations.} We denote a secret-shared value $x\in\mathbb{Z}_N$ as $\llbracket x\rrbracket$, where $N$ is a positive integer.
Also, we use $\llbracket x\rrbracket_0$ to denote $P_0$'s share of $x$, and $\llbracket x\rrbracket_1$ denote $P_1$'s share in $\ZZ_N$, where $x = \llbracket x\rrbracket_0 + \llbracket x\rrbracket_1\in\ZZ_N$. We also use $\gets$ to denote the assignment of variables, e.g., $x\gets 4$.

\subsection{The Process of Logistic Regression}
Given a training dataset with $m$ elements $\textbf{x}=\{x_1,x_2,\cdots,x_m\}$, each sample $x_i$ has an output label $y_i\in\bin$. Logistic regression employs weight coefficient $w$ and bias term $b$ to calculate a temporary output $g(x_i)=w\cdot x_i+b$ and then applies an activation function on $g(x_i)$ to map the output between zero and one. 
Most logistic regression algorithms utilize the logistic function, such as the sigmoid function, to perform non-linear mapping. The logistic function is represented by Equation~(\ref{initial sigma}).
\begin{equation}
    \sigma(x)=\frac{1}{1+e^{-x}}
    \label{initial sigma}
\end{equation}

Each $x_i$ may contain $n$ features. Therefore, we assume that input data is $\textbf{x}\in\mathbb{G}^{m\times n}$, weights are $\textbf{w}\in\mathbb{G}^{1\times n}$, bias are $\textbf{b}\in\mathbb{G}^{1\times m}$, and labels are $\textbf{y}\in\mathbb{G}^{m\times 1}$. 
 $\sigma$ is then used to map $\textbf{w}\textbf{x}^T+\textbf{b}$ into values between 0 and 1. 
For simplicity, the bias term \textbf{b} can be integrated into the weight coefficients \textbf{w}. Then, the predicted output label will be $\hat{\textbf{y}}^T=\sigma(\textbf{wx}^T)$.
The objective of logistic regression is to determine the optimal parameters \textbf{w} that minimize the discrepancy between the predicted label $\hat{\textbf{y}}$ and the true label \textbf{y}. To achieve this,
the cross-entropy function $C(\textbf{w})=(-\textbf{y}\log\hat{\textbf{y}}-(\mathbbm{1}-\textbf{y})\log(\mathbbm{1}-\hat{\textbf{y}}))/m$ is utilized as the cost function to measure the difference between the predicted and expected values. The calculation of $\textbf{w}$ is based on the optimization $\argmin_{\textbf{w}}C(\textbf{w})$.
The Stochastic Gradient Descent (SGD) is an iterative method for optimizing differentiable objective functions. The method updates the weights iteratively by computing the gradient of the loss function on the input data. Using SGD, we can approach a local minimum of the cost function. The SGD algorithm begins by randomly initializing \textbf{w}. The parameters are then updated along the gradient in the opposite direction. If we denote the learning rate as $\alpha$, the gradient for updating the weight in each iteration is computed by Equation ~(\ref{delta_w}).
\begin{equation}
    \Delta\textbf{w} = \alpha \cdot \frac{1}{m} \left(\sigma(\textbf{w}\textbf{x}^T)-\textbf{y}^T\right)\textbf{x}
    \label{delta_w}
\end{equation}

In each iteration, the weight is updated as $\textbf{w}:=\textbf{w}-\Delta\textbf{w}$ and is used in the next iteration. If the difference in accuracy between two adjacent epochs falls below a small threshold, it is assumed that \textbf{w} has converged to the minimum, and then the training process is terminated.
Researchers have found that when they applied MPC in logistic regression, the efficiency on both division and exponentiation was weaker~\cite{secureml}. To overcome this issue, they approximated division using Goldschmidt's~\cite{even2005parametric} or Newton's methods~\cite{markstein2004software} and the sigmoid function using Taylor series~\cite{aono2016scalable}. 
The direct use of the Taylor series to approximate the sigmoid function was found to be the most intuitive and effective method. To enhance the accuracy of the approximation and leverage the characteristics of the sigmoid function, a segmented function was utilized to transform the results of the Taylor approximation.

In this paper, we explore two distinct methods for approximating the sigmoid function. The first method involves the direct utilization of the Taylor series, while the second method employs a segmented function. The choice of the approximation method depends on the practical application scenarios. We provide an example of using the first-order Taylor series to illustrate the approximation of the sigmoid function. The approximation is represented by Equation ~(\ref{1-order taylor}).
\begin{equation}
    \sigma(x)=\frac{1}{2} + \frac{1}{4}x
    \label{1-order taylor}
\end{equation}
%

The Taylor series approximation method replaces exponentiation and division with multiplication and addition, thereby reducing computation complexity. However, a remaining problem is that the upper and lower bounds of the approximation are not zero and one as the sigmoid function. To maintain consistency with the general trend of the sigmoid function, when the input $x$ exceeds the upper threshold $\epsilon$, we directly set the output to 1, and when the input $x$ is below the lower threshold $-\epsilon$, the output is set to 0.
The segmented function for approximating the sigmoid function is defined by Equation ~(\ref{segmented function}).
\begin{equation}
\sigma(x)=
\begin{cases}
0, &x<-\epsilon \\
0.5+0.25x, & -\epsilon\leq x<\epsilon \\
1, &x\geq\epsilon
\end{cases}
\label{segmented function}
\end{equation}

\subsection{Our System Model}

In the literature, most MPC schemes~\cite{secureml, keller2020mp, SMPC} adopt MPC's preprocessing model, dividing the protocol execution into offline and online stages. The offline stage generates input-independent correlated randomnesses which are later used by the online stage. Interestingly, the offline operations could be either outsourced to a trusted third-party dealer, carried out by a trusted hardware (e.g. Intel SGX or ARM Trustzone)~\cite{noubir2016trusted}, or even emulated by a secure two-party protocol~\cite{yao1986generate} (e.g. Beaver Triple \cite{Beaver91a}). 

In this paper, we also adopted the pre-processing model and designed a novel secure logistic regression training protocol using function secret sharing technology, which improves online efficiency. 
 Figure~\ref{fig1} illustrates our initial framework for FSS-LR. 
\begin{figure}[!ht]
  \centering
  \includegraphics[width=0.9\linewidth]{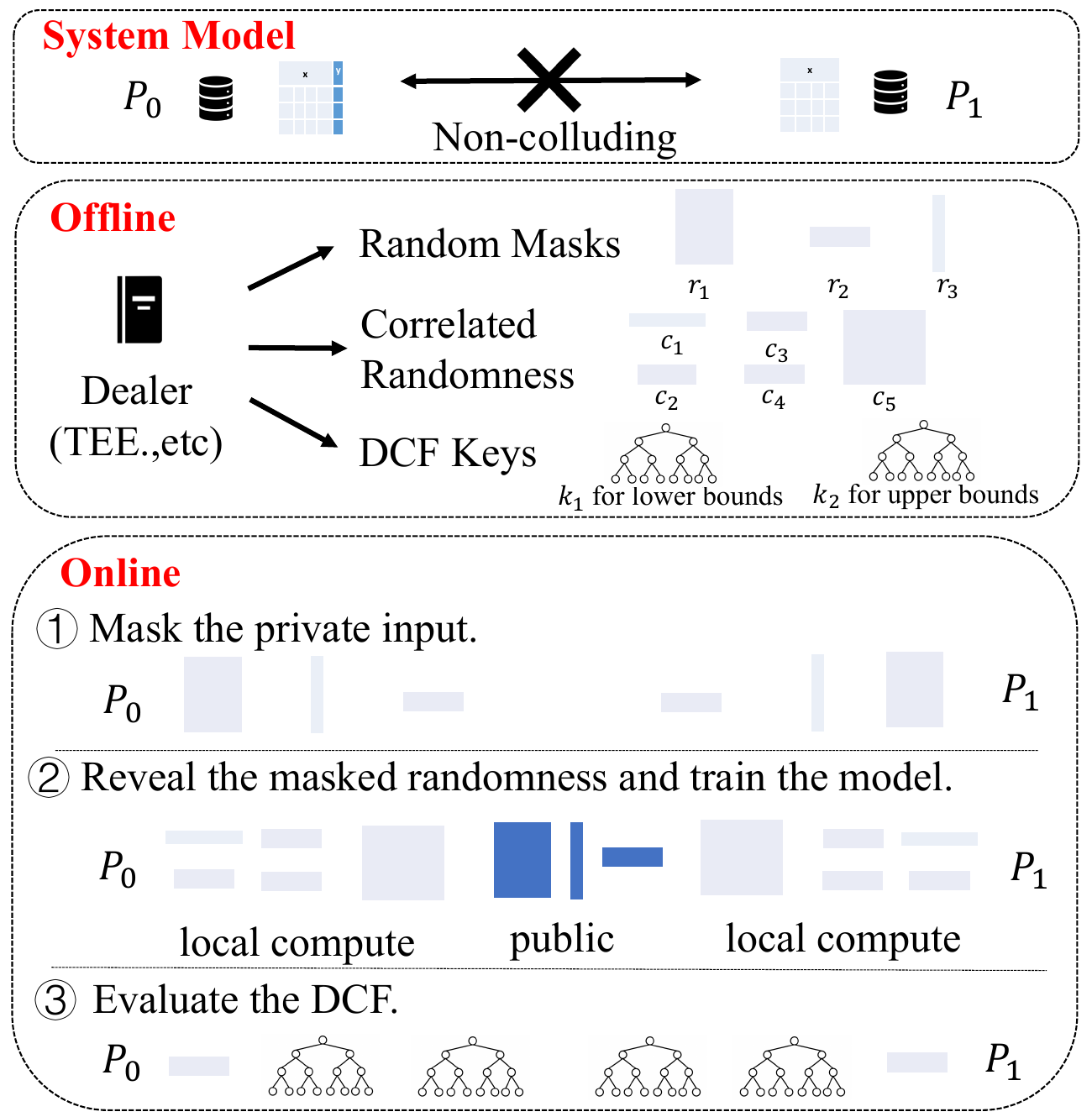}
  \caption{The Framework of FSS-LR.}
  \label{fig1}
\end{figure}
Our framework adheres to the standard simulation-based security definition \cite{DBLP:journals/eccc/Lindell17} in the semi-honest model. The security definition states that, given reasonable computational resources, an adversary cannot learn anything from the protocol except the function outputs.

\begin{definition}[Simulation-based Security Definition]
    We say a protocol $\pi$ is a secure instantiation of $f=(f_0, f_1)$ against semi-honest adversaries, if for all sufficiently large $\lambda\in\NN^*$, $x, y\in\bin^*$, there exists two PPT simulators $(\simulator_0, \simulator_1)$, such that the following holds,
    \begin{align*}
        \{(\simulator_0(1^\lambda, x, f_0(x, y)),  &f(x,y))\}_{x,y,\lambda}  \cindist\\
        \{(&\mathsf{view}^\pi_0(\lambda, x, y), \mathsf{output}^\pi(\lambda, x, y))\}_{x, y,\lambda}
        \\
        \{(\simulator_1(1^\lambda, y, f_1(x, y)),  &f(x,y))\}_{x,y,\lambda}  \cindist\\
        \{(&\mathsf{view}^\pi_1(\lambda, x, y), \mathsf{output}^\pi(\lambda, x, y))\}_{x, y,\lambda}.
    \end{align*}
    \label{def:simulation-security}
\end{definition}

\section{Our Proposed Protocol}
MPC or secret-sharing-based secure computation is a communication-heavy technique since it usually involves multiple rounds of interaction between each party for each operation. For instance in secret sharing logistic regression (SS-LR), to compute predicted labels and returned gradients, each party should open intermediate results twice and then use beaver triples to complete secure multiplication, resulting in significant communication costs and potential privacy leakage. To address this issue and improve online communication efficiency, we propose novel privacy-preserving logistic regression protocols that utilize function secret sharing and customized correlated randomness. 

\subsection{FSS-based Secure Computation} 
FSS is a rising secure multi-party computation technique~\cite{boyle2021function}, which aims to provide almost online-optimal secure computation. 
A two-server FSS scheme for a function family $\mathcal{F}$ splits a function $f\in\mathcal{F}:\bin^n$ into two additive shares $f_0,f_1$. Each share hides the original function $f$. For every input $x$, it always holds $f_0(x)+f_1(x)=f(x)$. Here, we assume both the input and output domains of $f$ are finite Abelian groups $\GG$, which we denote as $\GG^\text{in}, \GG^\text{out}$, respectively.
The share of the split function $f_i, i\in\bin$ is described by a compact key $k_i$ that is generated by the correlated randomness dealer.
Additionally, the dealer also sends out random values (e.g., $r_i$) to mask each party's private input.
Then, both parties reveal their masked inputs to each other and use previously generated keys to evaluate the split functions.
After the split functions have been evaluated, the results can be additively reconstructed or used as input for other downstream MPC applications.

\emphsection{Circuit and offset function family.}
In MPC, a basic operation (e.g., $+$, $\times $, $\xor$, etc.) can be represented by a gate. Each gate in the circuit is comprised of a pair of Abelian groups $(\GG^{\text{in}},\GG^{\text{out}})$, and a mapping $g\colon\GG^{\text{in}}\to\GG^{\text{out}}$. The combination of these gates forms a circuit, which represents a function family. The offset function is then used to adjust the output of the circuit to match the desired output. Using the FSS scheme, the function family $\hat{\mathcal{G}}$ of the offset function $\mathcal{G}$ is defined as the following,

\begin{align*}
&\left\{\begin{array}{l|l}
g^{\left[\mathrm{r}^{\mathsf{in }}, \mathrm{r}^{\mathsf{out }}\right]}: \mathbb{G}^{\mathsf{in }} \rightarrow \mathbb{G}^{\mathsf{out }} & \begin{array}{l}
g: \mathbb{G}^{\mathsf{in }} \rightarrow \mathbb{G}^{\mathsf{out }} \in \mathcal{G}, \\
\mathrm{r}^{\mathsf{in }} \in \mathbb{G}^{\mathsf{in }}, \mathrm{r}^{\mathsf{out }} \in \mathbb{G}^{\mathsf{out }}
\end{array}
\end{array}\right\}
\end{align*}
where we have $g^{\left[\mathrm{r}^{\mathsf{in }}, \mathrm{r}^{\mathsf{out }}\right]}(x):=g(x-\mathrm{r}^{\mathsf{in }})+\mathrm{r}^{\mathsf{out }}.$  $\mathrm{r}^{\mathsf{in }}, \mathrm{r}^{\mathsf{out }}$ are used to mask the input and output and help to achieve the offset function.

\subsection{FSS-based Logistic Regression}
Logistic regression is a mathematical model for classification. Traditional logistic regression training includes forward and backward propagation. During forward propagation, the predicted labels are computed and then passed through the sigmoid function to obtain values in the range of $[0,1]$. To simplify the computation of logistic regression, the sigmoid function can be approximated using the Taylor series, which allows us to express the logistic regression function as a function family of multiplication and addition. After that, we multiply the original data $\textbf{x}$ and the subtraction of predicted labels and the true labels to get the backward propagation gradient. In this way, the gradient $\Delta\textbf{w}$ can be computed as Equation ~(\ref{vector_delta_w}).
\begin{equation}
    \Delta\textbf{w}=\alpha\cdot\frac{1}{m}(\frac{1}{2}\cdot\mathbbm{1}+\frac{1}{4}\textbf{wx}^T-\textbf{y}^T)\textbf{x}
    \label{vector_delta_w}
\end{equation}

For higher accuracy, it is necessary to use segmented functions to constrain the approximation results within specific lower and upper bounds. Therefore, to calculate the backward gradient more precisely, the second approximation method using a segmented function is denoted as Equation ~(\ref{vector_spline}). 
\begin{equation} 
\begin{split}
    \Delta\textbf{w} =&-\alpha\cdot\frac{1}{m}\textbf{y}^T\textbf{x}\cdot\mathbf{1}\{\textbf{wx}^T_i<-\mathbf{\sigma}\} +\\
    & \alpha\cdot\frac{1}{m}(\frac{1}{2}\cdot\mathbbm{1}+\frac{1}{4}\textbf{wx}^T-\textbf{y}^T)\textbf{x}\cdot\mathbf{1}\{-\mathbf{\sigma}\leq\textbf{wx}^T_i<\mathbf{\sigma}\}+\\
    &\alpha\cdot\frac{1}{m}(\mathbbm{1}-\textbf{y}^T)\textbf{x}\cdot\mathbf{1}\{\textbf{wx}^T_i\geq\mathbf{\sigma}\} 
\end{split}
\label{vector_spline}
\end{equation}

After choosing two approximation methods, we further apply the FSS technique to design an offset function family gate specially for the logistic regression training process. Once the input is input into this gate, the gradient obtained from a single round of regression training will be output. This streamlined process requires only one round of reconstruction, reducing overall computation efforts and minimizing communication time. We denote the method using only the Taylor series as \textit{FSS-LR-V1}, while the method using a segmented function is denoted as \textit{FSS-LR-V2}.

In our protocol, the training data is secretly shared between two parties $P_0$ and $P_1$. We denote two parties' masked input data by $\llbracket \textbf{x}\rrbracket_0\in\mathbb{G}^{m\times n}$, $\llbracket\textbf{y}\rrbracket_0\in\mathbb{G}^{m\times 1}$ and $\llbracket\textbf{x}\rrbracket_1\in\mathbb{G}^{m\times n}$, $\llbracket\textbf{y}\rrbracket_1\in\mathbb{G}^{m\times 1}$. The coefficients \textbf{w}, initially set to be random, are secretly shared between two parties, denoted as $\llbracket \textbf{w}\rrbracket_0,\llbracket \textbf{w}\rrbracket_1\in\mathbb{G}^{1\times n}$. 
Among all the operations, the most expensive part comes from the two multiplication operations, one for forward predicting, and the other for backward gradient, respectively (1) $\mathbb{G}^{1\times n}\times\mathbb{G}^{n\times m}$, (2) $\mathbb{G}^{1\times m}\times\mathbb{G}^{m\times n}$. Our optimized method \emph{FSS-LR-V2} adds an additional implementation of the segmented function on top of the Taylor series.
Previous works~\cite{DBLP:conf/tcc/BoyleGI19,boyle2021function} provide several constructions of the \textit{Multiple Interval Containment (MIC)} gate for achieving interval partitioning. In our paper, we only introduce MIC with publicly known intervals.
\begin{definition}[Multiple Interval Containment (MIC)]
 The multiple interval containment gate $\mathcal{G}_\mathsf{MIC}$ could also be seen as a family of offset functions $g_{\mathsf{MIC}, n, m, P,Q}: \mathbb{U}_N\to \mathbb{U}^m_N$ for $m$ interval containments parameterized by input and output groups $\GG^\mathsf{in}=\mathbb{U}_N$ and $\GG^\mathsf{out} = \mathbb{U}_N^m$, respectively, and for $P=\left\{p_1, p_2, \ldots, p_m\right\}$ and $Q=\left\{q_1, q_2, \ldots, q_m\right\}$, given by
\begin{align*}
    \mathcal{G}_{\mathsf{MIC}}=\left\{g_{\mathsf{MIC}, n, m, P, Q}: \mathbb{U}_N\to\mathbb{U}_N^m\right\}_{0\leq p_i \leq q_i \leq N-1},\\
    g_{\mathsf{MIC}, n, m, P, Q}(x)=\left\{\mathbf{1}\left\{p_i \leq x \leq q_i\right\}\right\}_{1 \leq i \leq m},
\end{align*}
where we use $0\leq p_i\leq q_i\leq N$ to denote the $i$-th public interval.
\end{definition}
Existing MIC implementations leverage a crucial FSS building block \emph{Distributed Comparison Function (DCF)}. 

\begin{figure*}[t]
  \centering
  \includegraphics[width=0.9\linewidth]{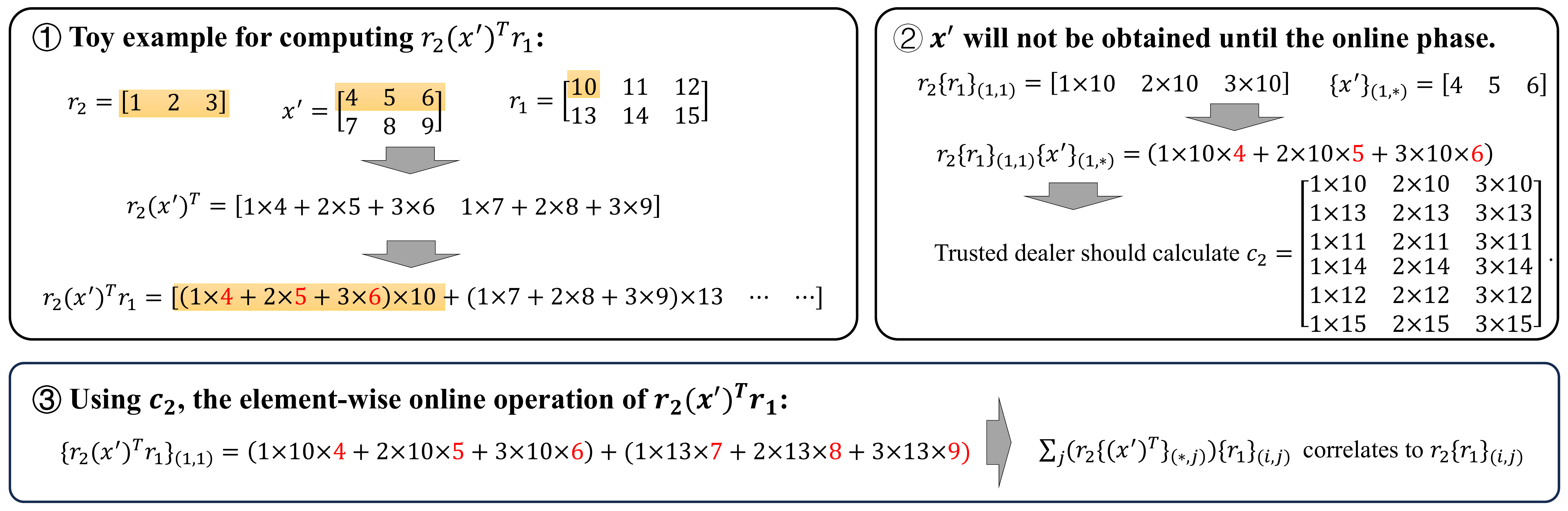}
  \caption{Toy example for correlated randomness $\textbf{c}_2$.}
  \label{fig2}
\end{figure*}
\begin{definition}[Distributed Comparison Function (DCF)]
    DCF is a special interval function $f_{\alpha, \beta}^{<}$, also referred to as a comparison function.  It outputs $\beta$ if $x<\alpha$ and 0 otherwise. Analogously, function $f_{\alpha, \beta}^{\leq}$ outputs $\beta$ if $x \leq \alpha$ and 0 otherwise.
\end{definition}
$\mathrm{DCF}_{n, \mathbb{G}}$ is used to count the number of invocations/evaluations as well as key size per evaluator $P_{b\in\bin}$. The FSS Gate for $\mathcal{G}_{\mathsf{MIC}}$ requires $2m$ invocations of ${\mathrm{DCF}}_{n, \mathbb{U}_N}$, and has a total key size of $mn$ bits plus the key size of $\mathrm{DCF}_{n, \mathbb{U}_N}$.
The details of how to use DCFs to build a MIC gate could be seen in \cite{DBLP:conf/tcc/BoyleGI19,boyle2021function}.
In our secure logistic regression training, $\mathcal{G}_{\mathsf{MIC}}$ is used to determine if the shares of the forward propagation $\textbf{wx}^T_i$ lie in the range $(-\infty,-\epsilon],(-\epsilon,\epsilon],$ or $(\epsilon,+\infty)$.  
Inside the MIC gate, the distributed comparison function $f_{\sigma,\mathbbm{1}}^{<}$ is first used to find whether the element in $\textbf{wx}^T$ is less than $\sigma$ or not. If not, the activation value of $\textbf{wx}^T$ should be $\mathbbm{1}$. Otherwise, it should be put into a further evaluation function $f_{-\sigma,\mathbbm{0}}^{<}$. Therefore, we can use two DCFs to partition three intervals. Then, multiplying the output of the $\mathcal{G}_{\mathsf{MIC}}$ gate by the segmented approximation result can get the final approximation of the sigmoid function output. According to the sigmoid function's different approximating modes, the weight's backpropagation result is obtained differently. Therefore, the overall logistic regression function family circuit consists of gates of addition, multiplication, and comparison. We assume the logistic regression gate $\mathcal{G}_{LR}$ as 
\begin{align*}
&\left\{\begin{array}{l|l}
g^{\left[\textbf{r}^{\mathsf{in }}_1, \textbf{r}^{\mathsf{in }}_2,\textbf{r}^{\mathsf{in }}_3 , \textbf{r}^{\mathsf{out }}\right]}_{LR}: \mathbb{G}^{\mathsf{in }} \rightarrow \mathbb{G}^{\mathsf{out }} & \begin{array}{l}
g: \mathbb{G}^{\mathsf{in }} \rightarrow \mathbb{G}^{\mathsf{out }}, \\
\left\{\textbf{r}^{\mathsf{in }}_1, \textbf{r}^{\mathsf{in }}_2, \textbf{r}^{\mathsf{in }}_3 \right\}\in \mathbb{G}^{\mathsf{in }}, \\
\textbf{r}^{\mathsf{out }} \in \mathbb{G}^{\mathsf{out }}\\
\end{array}
\end{array}\right\}
\end{align*}
where
$g^{\left[\textbf{r}^{\mathsf{in }}_1,\textbf{r}^{\mathsf{in }}_2,\textbf{r}^{\mathsf{in }}_3 \textbf{r}^{\mathsf{out }}\right]}(\textbf{x},\textbf{w},\textbf{y}):=
g\left(\textbf{x}-\textbf{r}^{\mathsf{in }}_1,\textbf{w}-\textbf{r}^{\mathsf{in }}_2,\textbf{y}-\textbf{r}^{\mathsf{in }}_3\right)+\textbf{r}^{\mathsf{out }}$.

The online computation determines what kind of correlated randomness needs to be produced in the offline phase. During the offline phase, we let the trusted dealer first use the pseudo-random generator (PRG) to sample three independently uniform random matrices $\textbf{r}_1\in\GG^{m\times n}, \textbf{r}_2\in\GG^{1\times n}, \textbf{r}_3\in\GG^{m\times 1}$, which are used to mask the \textbf{x}, \textbf{w}, \textbf{y} respectively. In the later online phase, we will use them to help reveal $\textbf{x}^\prime=\textbf{x}-\textbf{r}_1, \textbf{w}^\prime=\textbf{w}-\textbf{r}_2, \textbf{y}^\prime=\textbf{y}-\textbf{r}_3$.
Apart from random matrices, the dealer also needs to generate correlated randomness, $\textbf{c}_1=\textbf{r}_2\textbf{r}_1^T\in\GG^{1\times m}$, $\textbf{c}_2=\textbf{r}_2(\textbf{x}^\prime)^T\textbf{r}_1\in\GG^{n\times m\times n}$, $\textbf{c}_3=\textbf{r}_2\textbf{r}_1^T\textbf{r}_1\in\GG^{1\times n}$, $\textbf{c}_4=\textbf{r}_3^T\textbf{r}_1\in\GG^{1\times n}$, $\textbf{c}_5=\textbf{r}_1^T\textbf{r}_1\in\GG^{n\times n}$.

Among all the correlated randomness, we take the most complex correlated randomness $\textbf{r}_2(\textbf{x}^\prime)^T\textbf{r}_1\in\GG^{1\times n}$ as an example to illustrate. Let $\Delta\sample\GG^{1\times n}$ be the result, for $i\in\set{1,...,n}$, we have the value of $\textbf{r}_2(\textbf{x}^{\prime})\textbf{r}_1$ at the $i$-th position as follows:
\begin{align*}
    \Delta_{1, i} = & \set{\textbf{r}_2(\textbf{x}^\prime)^T\in\GG^{1\times m}} \cdot\set{\textbf{r}_1 \in\GG^{m\times n}}_{*, i}. 
    \\ = & 
    \set{\underbrace{\textbf{r}_2\cdot\set{\textbf{x}^\prime}_{*, 1}^T}_\text{a scalar}\|\cdots\|\underbrace{\textbf{r}_2\cdot\set{\textbf{x}^\prime}_{*, m}^T}_\text{a scalar}}\cdot \set{\textbf{r}_1}_{*,i}
    \\ = &
    \set{\textbf{r}_1}_{1,i}\cdot\textbf{r}_2\cdot\set{\textbf{x}^\prime}_{*, 1}^T+\cdots+\set{\textbf{r}_1}_{m,i}\cdot\textbf{r}_2\cdot\set{\textbf{x}^\prime}_{*, m}^T.
\end{align*}
Note that the $i$-th element is correlated with the $i$-th column of $\textbf{r}_1$ and the whole $\textbf{r}_2(\textbf{x}^{\prime})^T$.
The intermediate value $\textbf{r}_2\cdot\set{\textbf{x}^\prime}_{*, i}^T$ is reduced to be a scalar. Therefore, the correlation is element-wise doable using common multiplication. The other correlated randomness can directly be computed using ring operations. We give a toy example of computing $\textbf{c}_2$ in Figure~\ref{fig2}.
In the online phase, two parties first reveal the masked inputs to each other. As for \emph{FSS-LR-V1}, we just need to get the Taylor series approximation result in the ring field. 
We show \emph{FSS-LR-V1} in Figure~\ref{fig3}.

\begin{figure*}[!ht]
    \input{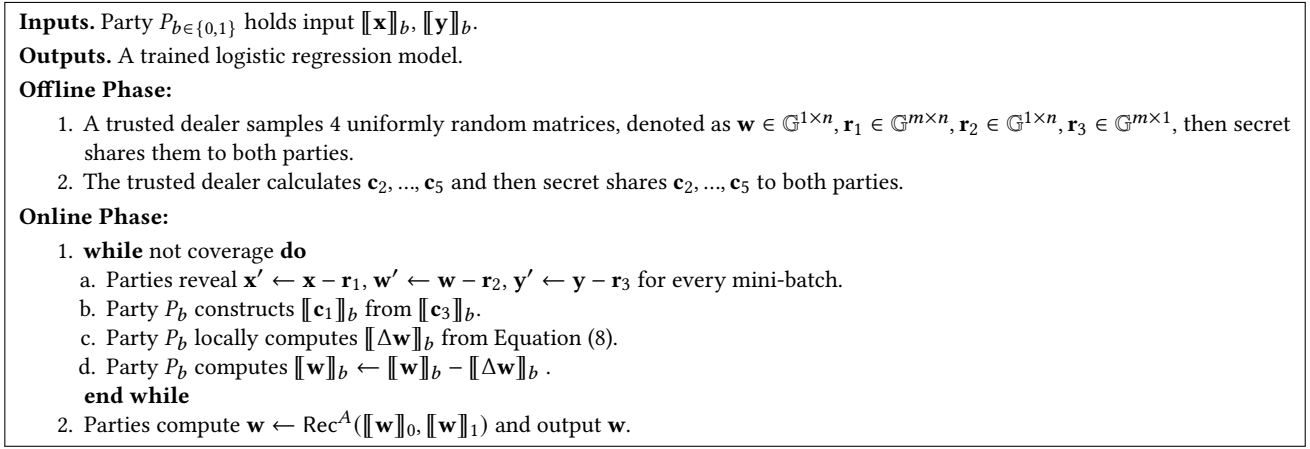}
    \caption{The logistic regression protocol (\emph{FSS-LR-V1}).}
    \label{fig3}
\end{figure*}

\begin{figure*}[!ht]
    \input{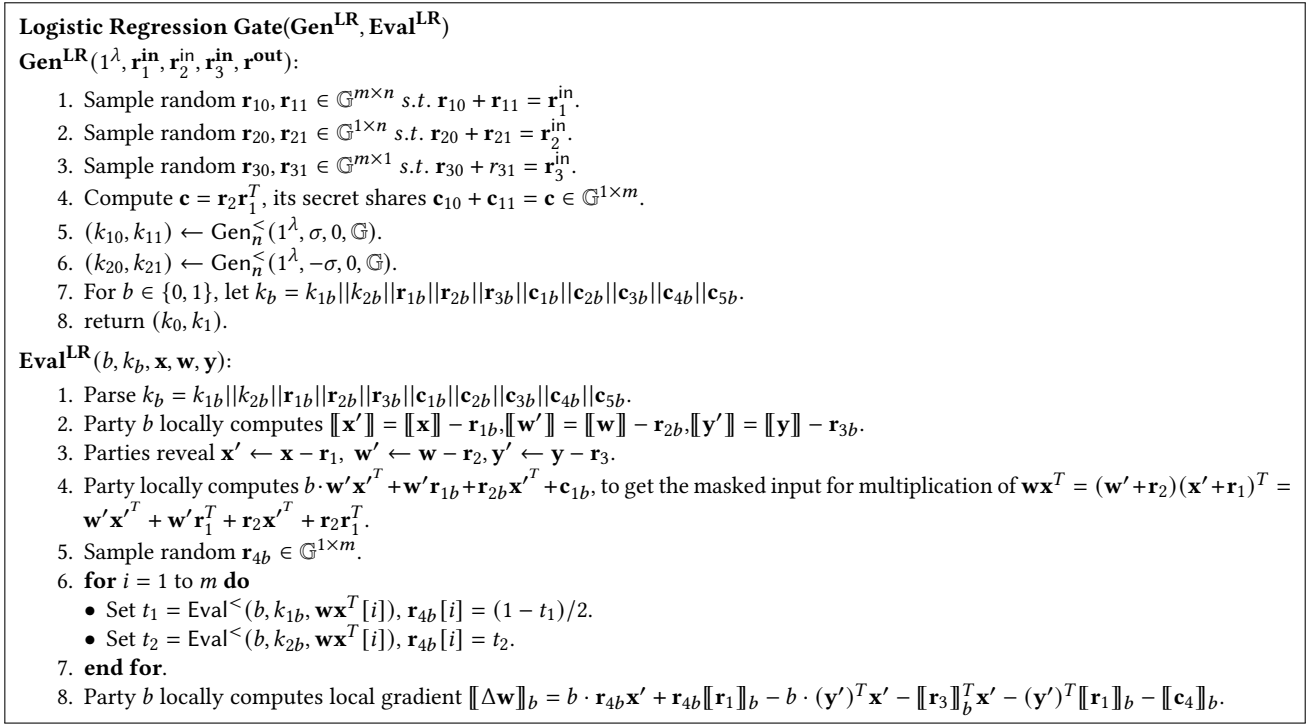}
    \caption{The logistic regression protocol (\emph{FSS-LR-V2}).}
    \label{fig4}
\end{figure*}

In \textit{FSS-LR-V2} (Figure~\ref{fig4}), the dealer will generate two DCF keys for implementing the MIC gate additionally. The weight's backpropagation continues after getting keys generated from the MIC scheme to identify intervals of the sigmoid function's different approximating modes. The model will iteratively update for some epochs. More specifically, the update weights should be in one of the below cases,
%
%

\emphsection{Case 1:} if the result of the forward propagation lies in $(-\infty,-\epsilon]$, the update weights are,
\begin{align}
    \llbracket\Delta\textbf{w}\rrbracket_b=-\frac{\alpha}{m}
     ((\textbf{y}^\prime)^T\textbf{x}^\prime
     +(\textbf{y}^\prime)^T\llbracket\textbf{r}_1\rrbracket_b
     +\llbracket\textbf{r}_3\rrbracket_b^T\textbf{x}^\prime + \llbracket\textbf{c}_4\rrbracket_b)
     \label{w_1}
\end{align}

\emphsection{Case 2:} if the result lies in $(-\epsilon,\epsilon]$, the update weights are
\begin{equation}
    \begin{aligned}
        \llbracket\Delta\textbf{w}\rrbracket_b= & \frac{\alpha}{4m}\Big(
        b\cdot2\cdot\mathbbm{1}\textbf{x}^\prime + 2\cdot\mathbbm{1}\llbracket\textbf{r}_1\rrbracket_b+ \llbracket\textbf{c}_3\rrbracket_b \\
        &+
        (
        b\cdot\textbf{w}^\prime(\textbf{x}^\prime)^T\textbf{x}^\prime 
        + \textbf{w}^\prime\llbracket\textbf{r}_1\rrbracket_b^T\textbf{x}^\prime 
        + \llbracket\textbf{r}_2\rrbracket_b(\textbf{x}^\prime)^T\textbf{x}^\prime\\
        &+ \textbf{w}^\prime(\textbf{x}^\prime)^T\llbracket\textbf{r}_1\rrbracket_b
        + \textbf{w}^\prime\llbracket\textbf{c}_5\rrbracket_b
        + \llbracket\textbf{c}_2\rrbracket_b(\textbf{x}^\prime)^T+ \llbracket\textbf{c}_1\rrbracket_b\textbf{x}^\prime)\\
        &- 4( b \cdot
        (\textbf{y}^\prime)^T\textbf{x}^\prime + (\textbf{y}^\prime)^T\llbracket\textbf{r}_1\rrbracket_b + \llbracket\textbf{r}_3\rrbracket_b^T\textbf{x}^\prime + \llbracket\textbf{c}_4\rrbracket_b
        )\Big)
    \end{aligned}
    \label{w_2}
\end{equation}

\emphsection{Case 3:} if the result lies in $(\epsilon,+\infty)$, we have
\begin{equation}
    \begin{aligned}
       \llbracket\Delta\textbf{w}\rrbracket_b= \frac{\alpha}{m}(
       &\textbf{x}^\prime-(\textbf{y}^\prime)^T\textbf{x}^\prime\\
       &+\llbracket\textbf{r}_1\rrbracket_b-(\textbf{y}^\prime)^T\llbracket\textbf{r}_1\rrbracket_b-\llbracket\textbf{r}_3\rrbracket_b^T\textbf{x}^\prime - \llbracket\textbf{c}_4\rrbracket_b)
    \end{aligned}
    \label{w_3}
\end{equation}

Now, we provide a formal security proof for our proposed protocols. Note that since we leverage standard FSS-components, the proof is intuitive and straight-forward.

\begin{lemma}
    Our protocols in Figure~\ref{fig3} and Figure~\ref{fig4} are secure instantiations (see Definition~\ref{def:simulation-security}) of LR functionality in the semi-honest treat model, given the existence of a trusted dealer who distributes correlated randomnesses.
\end{lemma}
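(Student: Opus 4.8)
The plan is to give a standard simulation-based argument in the $(\mathsf{Gen}^{<}, \mathsf{Eval}^{<})$-hybrid model, exploiting the fact that all online communication consists of the single revealed opening $(\textbf{x}', \textbf{w}', \textbf{y}')$ per iteration, together with (in \emph{FSS-LR-V2}) the public DCF evaluations. First I would fix $b\in\bin$ and construct the simulator $\simulator_b$ for the corrupted party $P_b$. The simulator receives $1^\lambda$, $P_b$'s input, and the final output $\textbf{w}$, and must produce a transcript indistinguishable from $\mathsf{view}^\pi_b$. The key observation is that in the offline phase $P_b$ receives only additive shares of $\textbf{w}, \textbf{r}_1, \textbf{r}_2, \textbf{r}_3, \textbf{c}_1,\ldots,\textbf{c}_5$ (and, for V2, one key $k_{1b}, k_{2b}$ from $\mathsf{Gen}^{<}$), each of which is, in isolation, a uniformly random group element (resp.\ a simulatable FSS key by the security of DCF). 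So $\simulator_b$ simply samples these uniformly at random.

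Next I would handle the online phase. In each iteration, the only incoming message is the other party's share of $\textbf{x}', \textbf{w}', \textbf{y}'$, so that $P_b$ learns the full masked values $\textbf{x}' = \textbf{x} - \textbf{r}_1$, etc. Since $\textbf{r}_1, \textbf{r}_2, \textbf{r}_3$ are uniform and independent of everything in $P_b$'s view, the simulator samples $\textbf{x}', \textbf{w}', \textbf{y}'$ uniformly at random and sends the corresponding "complement" shares. For \emph{FSS-LR-V2} there is the additional subtlety of the loop invoking $\mathsf{Eval}^{<}$ on $\textbf{wx}^T[i]$: these calls are local (no communication), so they do not enter the transcript, but one must argue the resulting shares are distributed correctly --- this follows from DCF correctness plus the fact that $\textbf{r}_{4b}$ is freshly sampled to re-randomize the output share. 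Finally, I would verify consistency of the last reconstruction step $\textbf{w}\gets\mathsf{Rec}^A(\cdot)$ with the prescribed output by having $\simulator_b$ program the complement share so that the opened value equals the given $\textbf{w}$; this is the standard "last-message programming" trick and works because $P_b$'s own share of $\textbf{w}$ is uniform.

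The correctness half (that the protocol indeed computes $\Delta\textbf{w}$ as in Equations~(\ref{vector_delta_w}) and~(\ref{vector_spline})) reduces to expanding $\textbf{wx}^T = (\textbf{w}'+\textbf{r}_2)(\textbf{x}'+\textbf{r}_1)^T$ and the analogous backward product, checking that the cross terms are exactly covered by $\textbf{c}_1,\ldots,\textbf{c}_5$, and that the MIC/DCF gate partitions the three intervals $(-\infty,-\epsilon], (-\epsilon,\epsilon], (\epsilon,+\infty)$ correctly --- this is routine ring algebra that I would state but not grind through.

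The main obstacle I anticipate is \emph{not} any single step but rather the bookkeeping of the correlated randomness in \emph{FSS-LR-V2}: one must confirm that each $\textbf{c}_i$ appears in the local-computation formulas (\ref{w_1})--(\ref{w_3}) in a way that (a) makes the two parties' shares sum to the correct cleartext gradient and (b) leaks nothing beyond the masked openings --- i.e., that no product of two secret-masked quantities is ever computed without a matching precomputed correction term. I would address this by tabulating, for each quadratic term in the expansion of the forward and backward products, which $\textbf{c}_i$ cancels it, and then invoking the generic FSS-gate composition theorem of~\cite{DBLP:conf/tcc/BoyleGI19} to conclude that the resulting offset-function gate $\mathcal{G}_{LR}$ inherits security from its DCF and multiplication-gate components. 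Once that accounting is in place, indistinguishability of $\simulator_b$'s output from the real view follows by a short hybrid argument: replace the real DCF keys with simulated ones (security of DCF), then replace the real masked openings with uniform ones (perfect, since the masks are one-time uniform), which yields exactly $\simulator_b$'s distribution.
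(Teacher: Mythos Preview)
Your proposal is correct and follows essentially the same approach as the paper: argue that $P_b$'s view consists of (i) additive shares from the dealer, which are independently uniform, (ii) the revealed masked values $\textbf{x}',\textbf{w}',\textbf{y}'$, which are one-time-padded by fresh uniform $\textbf{r}_1,\textbf{r}_2,\textbf{r}_3$, and (iii) for \emph{FSS-LR-V2}, the DCF keys, which are simulatable by DCF security---so the simulator just samples everything uniformly. Your version is considerably more detailed than the paper's (you add the output-programming step, the correctness expansion, and an explicit hybrid argument), whereas the paper's proof is a short paragraph that simply asserts the view is indistinguishable from uniform randomness; but the underlying idea is identical.
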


\begin{proof}
    We prove the security of our protocol using the simulation-based technique. Since our protocol assumes the existence of a trusted dealer distributing correlated randomnesses, therefore we only need to focus on the received messages (a.k.a. the ``view'') at computing parties. Since both parties have identical processes, we only show the security for $P_{b\in\bin}$.

    In \textit{FSS-LR-V1}, the view of $P_b$ includes (1) the correlated randomness sent by the dealer: $\llbracket\textbf{w}\rrbracket_b, \llbracket\textbf{r}_1\rrbracket_b, \llbracket\textbf{r}_2\rrbracket_b, \llbracket\textbf{r}_3\rrbracket_b, \llbracket\textbf{c}_2\rrbracket_b, \cdots, \llbracket\textbf{c}_5\rrbracket_b$, (2) the revealed $\textbf{x}^\prime, \textbf{w}^\prime, \textbf{y}^\prime$. Since all secret-shared values $\llbracket\cdot\rrbracket$ are independently random, and $\textbf{x}^\prime, \textbf{w}^\prime, \textbf{y}^\prime$ are masked with uniform random $\textbf{r}_1, \textbf{r}_2, \textbf{r}_3$. Thus it's intuitive that the view is information-theoretically indistinguishable from uniform randomness.
    Similarly, in \textit{FSS-LR-V2}, parties additionally receive the DCF keys. By definition, those DCF keys are independently random (against computationally-bounded adversaries), and thus the view is indistinguishable from uniform randomness.

\end{proof}

\section{Implementations and Comparisons}
We implement our framework with C++ in $\mathbb{Z}_{2^\ell}$ and it has been observed that the modulo operations are free on CPUs~\cite{huang2022cheetah}. The ring structure is utilized to store matrices and perform fundamental operations such as addition, subtraction, multiplication, and division efficiently. 
The offline phase involves generating auxiliary random matrices to mask the original private computation data, which are created using a pseudorandom generator (PRG). For consistency, we fixed random seeds for the PRG during the experiment.

\emphsection{Experiment settings.}
The experiments were conducted on a Linux machine with an Intel(R) Xeon(R) Silver 4210R CPU @ 2.40GHz, and a cache size of 14080 KB. Two terminals were opened on a LAN network to represent the participating servers. Both servers completed the logistic regression training process with their respective private data. We implemented the logistic regression gate, which takes private data as input and outputs one round of gradients after forward and backward propagation.

\emphsection{Datasets.} 
For different experimental purposes, we sampled different datasets. We also conducted experiments on real-world datasets, such as Iris, Diabetes, and Breast Cancer.
(1) Iris dataset contains 3 classes of 50 instances each, where each class refers to a type of iris plant. We select two classes to participate in the binary logistic regression training. Four features were measured from each sample: the length and the width of the sepals and petals, in centimeters. (2) Diabetes dataset, originated from the National Institute of Diabetes and Digestive and Kidney Diseases, contains instances of 768 women from a population near Phoenix, Arizona, USA. There are 8 attributes and 1 outcome indicating the positive/negative (0/1) samples. (3) Breast Cancer dataset comes from Scikit-learn's machine learning library and contains malignant/benign (1/0) category-type data for breast cancer for 569 patients, and corresponding physiological index data for 30 dimensions. 
We distributed all datasets in an additive secret sharing way to both servers.


\emphsection{Implementation details.} 
We have developed a privacy-preserving logistic regression system using the SPU library from SecretFlow~\cite{spu}\footnote{\url{https://github.com/secretflow/spu}}, which provides a framework for provable, measurable secure computation while maintaining data privacy. To assess the accuracy and efficiency of various approximations of the sigmoid function, we conducted several experiments that included implementations of the sigmoid approximation function in plaintext, secure multi-party computation, and secure logistic regression training. We compared our FSS-LR approach with secure logistic regression implemented in MP-SPDZ~\cite{keller2020mp}\footnote{\url{https://github.com/data61/MP-SPDZ}} and SS-LR implemented in SecretFlow. Additionally, to demonstrate online efficiency, we compared the online communication time and communication overheads of SS-LR and FSS-LR on data sets of different sizes.

\subsection{Compare with SS-LR}
Using the FSS scheme, we can get the offset function family of a logistic regression function. More correlated randomness can be generated beforehand. In SS-LR, beaver triples are only used for masking the original input, whereas, in FSS-LR, correlated randomness includes all the input-independent computational terms that will take effect in secure logistic regression training.
For the case of SS-LR, the first matrix multiplication ($\textbf{w}\textbf{x}^T$) needs to open $mn+n$ elements for $\textbf{w}^\prime\in\mathbb{G}^{n\times 1}$ and $\textbf{x}^\prime\in\mathbb{G}^{m\times n}$.
Then for the second matrix multiplication, since we already have the opened value for $\textbf{x}$, we only need $m$ masked elements of the subtraction between the true and predicted labels. Therefore, the whole matrix multiplications for logistic regression need a total of $mn+n+m$ elements within $2-$round communication.
For the case of \textit{FSS-LR-V1}, we need to open $\textbf{x}^\prime\in\mathbb{G}^{m\times n}$, $\textbf{w}^\prime\in\mathbb{G}^{1\times n}$, $\textbf{y}^\prime\in\mathbb{G}^{m\times 1}$.
The whole online communication costs $mn+n+m$ elements within $1-$round communication.
Similarly, the whole online communication cost for \textit{FSS-LR-V2} is $mn+n+2m$ within $2-$round communication.

%
For the offline phase, if we allow each party to specify their own randomness shares (e.g., $\llbracket\textbf{r}\rrbracket_b$) using PRG, then the only thing the dealer needs to distribute is the beaver triples or the correlated randomness. In SS-LR, the communication overhead of the offline phase is $2mn+2n+2m$, while in \textit{FSS-LR-V1}, it is $mn+n^2+4n+3m$. The offline overhead of \textit{FSS-LR-V2} requires an additional key generation overhead of DCF, in addition to that of \textit{FSS-LR-V1}.

%
Although the costs of the offline phase in \textit{FSS-LR-V1} and \textit{FSS-LR-V2} are larger than SS-LR, the efficiency of the online phase is greatly improved. Furthermore, \textit{FSS-LR-V2} provides higher accuracy than the basic SS-LR.

 \begin{table*}[ht]
    \begin{tabular}{@{}cccccccc@{}}
    \toprule
    & \textbf{Plaintext}& \textbf{Segmented-NonLinear} & \textbf{Segmented-Taylor} & \textbf{Reciprocal} & \textbf{Square Root} & \textbf{Taylor Series} & \textbf{Sigmoid}\\
    \midrule
    Plaintext Error & - & \textbf{0.0062} & 0.0345 & 0.0577 & 0.0255 & 0.8204 & - \\
    MPC Error & - & \textbf{0.0062} & 0.0345 & 0.0577 & 0.0255 & 0.8204 & \textbf{0.0003} \\
    MPC Time(s) & - & 0.1043 & 0.0098 & \textbf{0.0092} & 0.0094 & 0.0301 & 0.2150 \\

    \midrule
   LR AUC & \textbf{0.9015} & 0.8969 & 0.8981 & \textbf{0.8996}  & 0.8922 & 0.8827 & 0.8972\\
   LR  Train Time(s) & \textbf{8.03} & 177.18 & 133.07 & 186.32 & 160.74 & \textbf{88.88} & 188.34\\
    LR Inference Time(s)&  \textbf{0.47} & 27.37 & 23.71 & \textbf{23.18} & 25.74 & 23.79 & 24.65\\
    \bottomrule
    \end{tabular}
    \vskip 5pt
    \caption{Efficiency and accuracy of different secure sigmoid approximations.}
    \label{table4}
\end{table*}

\subsection{Sigmoid Experiments}
The computation of the sigmoid function is a crucial and challenging aspect of secure logistic regression training. To enhance its effectiveness and efficiency, we conducted a comparison study of five commonly used sigmoid approximations. Our analysis included the Taylor Series expansion, two segmented functions with Taylor expansion and non-linear functions, and two complex functions comprising reciprocal and square root functions. 

\begin{enumerate}
    \item First-order Taylor expansion : $y(x)=0.5+0.25x$
    \item Segmented function with Taylor expansion: 
    \begin{align*}
        y(x)=
        \begin{cases}
        0.5 + 0.125x  & {-4 \leq x \leq -4}\\
        1 &  {x > 4}\\
        0 &  {x<-4}
        \end{cases} 
    \end{align*}

    \item Segmented function with nonlinear functions:
        \begin{align*}
            y(x)=
            \begin{cases}
            0 & {x \leq -4}\\
            1 &  {x > 4}\\
            0.5(1-0.25|x|)^2 & {-4< x \leq 0}\\
            1-0.5(1-0.25|x|)^2  & {0<x\leq 4}
            \end{cases} 
        \end{align*}

    \item Reciprocal approximation:
    $y(x)=0.5(\frac{x}{1+|x|})+0.5$

    \item Square root approximation:
    $y(x)=0.5(\frac{x}{\sqrt{1+x^2}})+0.5$
    
\end{enumerate}

\emphsection{Sigmoid accuracy.} 
 We uniformly sampled 1,000 instances between -10 and 10, and compared different approximation methods with the real sigmoid function. From Figure~\ref{fig5}, it can be seen that except for the Taylor Series approximation, the general trend of the other functions is similar to that of the real sigmoid function.
\begin{figure}[h!]
  \centering
  \includegraphics[width=0.85\linewidth]{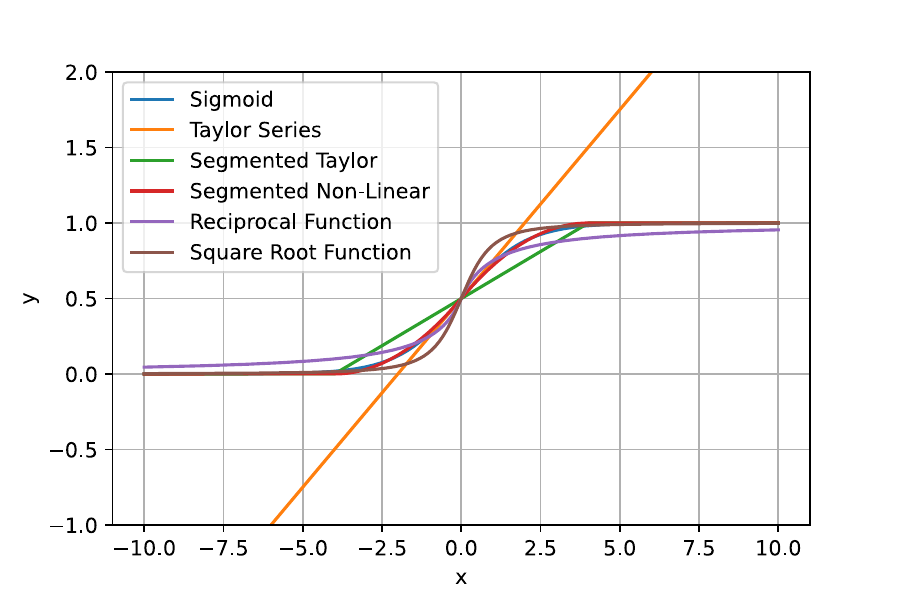}
  \caption{Sigmoid function and approximation function}
  \label{fig5}
\end{figure}

Therefore, we conducted accuracy and performance tests on various sigmoid approximations implemented in plaintext and secure multiparty computation. Specifically, we calculated their average absolute error compared to the plaintext sigmoid function and the runtime in secure multiparty computation. The upper half of Table ~\ref{table4} presents the accuracy and performance comparison between our approximate methods and the original sigmoid function. According to the results in the upper half of Table ~\ref{table4}, the MPC error is almost the same as the plaintext error in the small dataset that uses only sigmoid approximations.
This is because theoretically, the gap between MPC and plaintext mainly comes from the difference in the underlying data representation, with MPC supporting fixed-point operations and plaintext supporting floating-point operations.
The segmented function with the non-linear method exhibits the lowest error, with a value of 0.0062 for both plaintext and MPC. The reciprocal approximation function method has a slightly higher error of 0.0577 in both cases. Regarding computational time, the segmented function with the Taylor series and reciprocal approximation methods have the lowest time (9.8 and 9.2 milliseconds, respectively), followed by the square root method with 9.4 milliseconds. The segmented function with the non-linear method and Taylor Series method have higher time (104.3 and 30.1 milliseconds, respectively), and the real sigmoid method has the highest time at 215 milliseconds. Different approximation methods have their own applicable scenarios and cases. Therefore, in the next section, we will explore an appropriate method used in secure logistic regression training.

%

\emphsection{Sigmoid performance evaluation.} 
We sampled a larger dataset comprising 500,000 unique instances, each containing 100 features. To enable secure computation, we additively secret shared this dataset with two parties. Two parties then conducted logistic regression training and inference using different sigmoid approximations, through secure multi-party computation.
The lower half of Table~\ref{table4} shows that the reciprocal approximation method achieves the best approximation of the plaintext method, with a loss of 0.0018947 compared with the plaintext AUC. The Taylor series approximation method shows a marginal difference in the AUC score and inference time compared to the reciprocal approximation method. But the Taylor series approximation method significantly reduces training time, offering almost a 2x speedup compared to other approximation methods. Therefore, we will apply the Taylor series approximation in the subsequent experiments. The following \textit{FSS-LR} refers to \textit{FSS-LR-V1}.

\subsection{Logistic Regression Experiments} 
 As analyzed in Section 5.1, SS-LR requires two matrix multiplications to be completed, which involves two online interactions. In contrast, FSS-LR only requires one call to our logistic regression gate and only one round interaction to be completed throughout the gate. Here, we use real datasets mentioned at the beginning of the section to evaluate the secure logistic regression both on secret sharing with beaver triples and function secret sharing with correlated randomness. 

\begin{table}[t!]
\centering
\begin{tabular}{@{}lccc@{}}
\toprule
 & LR & SS-LR & FSS-LR   \\
 \midrule
Iris &1  & 1 & 0.9968\\
Diabetes & 0.727275 & 0.696761 & 0.676993\\
Breast Cancer &0.999943& 0.991967 & 0.983154 \\
\bottomrule
\end{tabular}
\vskip 5pt
\caption{AUC comparison of SS-LR, FSS-LR with Taylor series approximation implemented in SecretFlow and LR in plaintext.
}
\label{table5}
\end{table}

 \begin{table*}[t!]
\centering
\begin{tabular}{@{}ccc|cccc|cccc@{}}
\toprule
 & &Dataset &  &  & SS-LR &  &  &  & FSS-LR& \\
n & d & size(MB) & online(s) & total(s) & \textbf{Comm.(MB)} & rounds & online(s) & total(s) & \textbf{Comm.(MB)} & rounds \\
\midrule
 & 100 & 0.94  & 0.0365 & 0.1146 & 9.06 ($\approx 3.2\times$) & 14 & 0.0248 & 0.1697 & 2.81 & 7 \\
1000 &500  & 4.68 & 0.0893 & 0.7401 & 44.73 ($\approx 3.2\times$) & 14 & 0.0632 & 1.1808 & 13.83 & 7  \\
 & 1000 & 9.36 & 0.2295 & 1.6613 & 89.33 ($\approx 3.2\times$) & 14 & 0.1018 & 2.9279 & 27.61 & 7   \\
 \midrule
 & 100 & 9.39   & 0.4553 & 1.2044 & 1203.04 ($\approx 3.9\times$) & 156 & 0.2535 & 1.5314 & 309.27  & 78 \\
10000 & 500 & 46.83 & 1.7509 & 7.6318 & 5944.03 ($\approx 3.9\times$)  & 156 & 0.7282 & 10.7458 & 1522.19 & 78 \\
 & 1000 & 93.64 & 3.2473 & 13.9655 & 11870.26 ($\approx 3.9\times$) & 156 & 1.1400 & 28.7184 & 3038.35 & 78 \\
 \midrule
 & 100 & 93.90 & 4.3851 & 10.6551 & 121312.39 ($\approx 4.0\times$) & 1560 & 2.5071 & 14.0974 & 30653.12 & 780 \\
100000 & 500  & 468.34 & 17.6945 & 56.3174 & 599398.41 ($\approx 4.0\times$) & 1560 & 7.3247 & 92.4672 & 150870.59 & 780 \\
 & 1000 & 936.37 & 32.7224 & 112.5950 & 1197005.94 ($\approx 4.0\times$) & 1560 & 11.5450 & 255.4590 & 301142.42 & 780 \\
\bottomrule
\end{tabular}
\vskip 5pt
\caption{Performance of the online phase. batch size = 128, epoch = 1, learning rate = 0.5. 
}
\label{table6}
\end{table*}

\emphsection{Accuracy evaluation.} The converged AUCs of logistic regression training in plaintext, secret sharing, and function secret sharing are shown in Table~\ref{table5}. The AUC of our scheme is very close to that of doing logistic regression in plaintext and secret sharing. However, there is still a gap in our approach compared with the plaintext LR and SS-LR. The AUC loss of our method is mainly because we designed the logistic regression gate with multiple truncations inside, each of which introduces some small, unavoidable errors. 

\begin{table}[!h]
\centering
\begin{tabular}{@{}lccc@{}}
\toprule
 & MP-SPDZ & SS-LR & FSS-LR   \\
 \midrule
Online time(s)& 11.2792 & 0.0305 & 0.0289\\
Total time(s) & 12.2984& 0.0703 & 0.0520 \\
Total rounds & 16608 & 2 & 1\\
Online Communication(MB) & 92.7595 & 11.6 & 2.6 \\
\bottomrule
\end{tabular}
\vskip 5pt
\caption{Online effectiveness and efficiency comparison of MP-SPDZ, SS-LR, and FSS-LR in the breast cancer dataset. }
\label{table7}
\end{table}

\emphsection{Performance evaluation.} Our experiments evaluate the online phase performance of FSS-LR compared with the baseline experiments SS-LR and secure logistic regression training using SGD implemented in MP-SPDZ. 
The online phase consists of data-dependent communication between two parties. In our experiments, we fix the training parameters and count the online communication time, rounds, overheads, and total running time. 
In Table ~\ref{table7}, we compare the online performance of MP-SPDZ, SS-LR, and FSS-LR on the same dataset in 1 epoch. 
Based on Table ~\ref{table7}, we can observe that the FSS-LR method demonstrates higher efficiency and better performance in terms of online time, total time, total rounds, and online communication compared to the MP-SPDZ and SS-LR methods. Specifically, in terms of online time and total time, the FSS-LR method is respectively 390.7 and 26.9 times faster than the MP-SPDZ method, and 1.05 and 1.4 times faster than the SS-LR method. The FSS-LR method requires only one round of communication, whereas the MP-SPDZ method requires 16,608 rounds and the SS-LR method requires 2 rounds. Regarding online communication, the FSS-LR method requires only 2.6 MB, which corresponds to 2.8\% of the MP-SPDZ method and 22.4\% of the SS-LR method. Therefore, it can be concluded that the FSS-LR method outperforms the MP-SPDZ and SS-LR methods in terms of online performance and communication efficiency. 

To further demonstrate the scalability and the acceleration effect of our proposed method on the online phase, we randomly generated simulated datasets of various magnitudes to train the logistic regression model using SS-LR and FSS-LR. In this experiment, 3 different sizes of instances ($n = 1000, 10000, 100000$) are set and 3 dimensions ($d = 100, 500, 1000$) are designed for each size of instances.
Results are shown in Table~\ref{table6}.
 When the sampling size is relatively small, such as $1000\times100$, the difference between FSS-LR and SS-LR in terms of online phase time and the total communication volume is not very large. 
 However, as the dataset size gets larger, we can clearly find that the communication volume and communication time of SS-LR is almost three times that of FSS-LR. 
 \begin{figure}[!h]
\centering
\includegraphics[width=1\columnwidth]{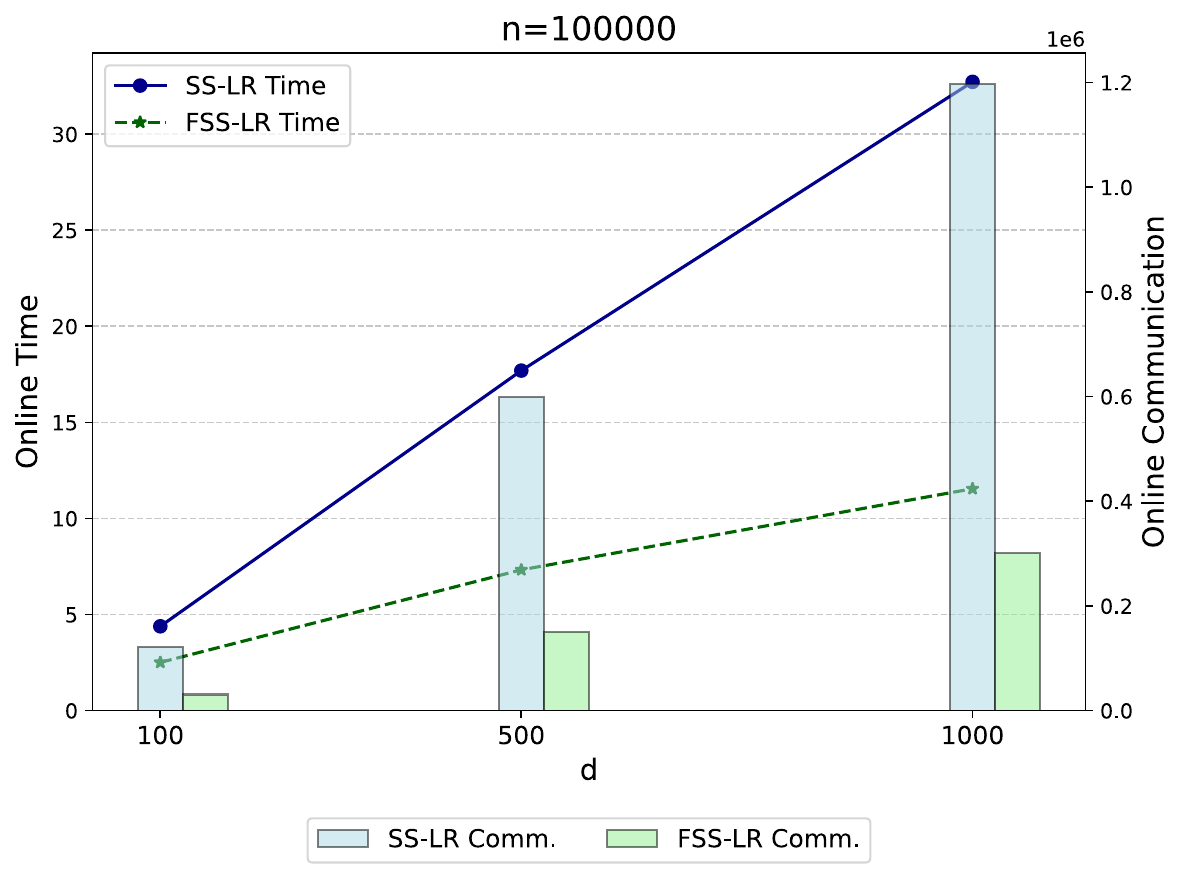} 
\caption{The online time and communication volume for different sizes of datasets.}
\vspace{-0.8cm} 
\vskip 5pt
\label{fig6}
\end{figure}

We set the number of samples $n=100000$, and the feature dimension $d=100, 500, 1000$ and show the performance of SS-LR and FSS-LR more visually in Figure ~\ref{fig6}. The online time of SS-LR is almost three times that of FSS-LR, and the communication load is about 3 to 4 times. From the above experiments, we show that our method trains a logistic regression model with efficient function secret sharing, and performs optimal online communication.

\section{Conclusion}

In this work, we leverage function secret sharing techniques to accelerate the online communication phase of the secure logistic regression training. 
We design MPC-friendly approximation schemes for the activation function and implement a complete online optimal logistic regression gate using function secret sharing techniques, which accepts masked input and outputs masked results, enabling two parties to decrypt the final resulting model at the end of training.
Our scheme significantly improves the online communication efficiency compared to the original SS-LR.
However, there is still room for improvement in terms of overall overhead and model accuracy, especially in large datasets and complex models.

\section{Acknowledgement}
This work was supported by the National Natural Science Foundation of China under grant number 62202170 and the Ant Group.


\pagebreak
\bibliographystyle{ACM-Reference-Format}
\bibliography{ref}










\end{document}